\newcounter{ALC@tempcntr}
\theoremstyle{plain}
\newtheorem{proposition}{Proposition}
\newtheorem{lemma}{Lemma}
\newtheorem{definition}{Definition}
\theoremstyle{definition}
\newtheorem{construction}{Construction}
\theoremstyle{remark}
\newtheorem{remark}{Remark}
\newcommand{\beq}{\begin{eqnarray}}
\newcommand{\eeq}{\end{eqnarray}}
\newcommand{\field}[1]{\mathbb{#1}}
\newcommand{\F}{\field{F}}
\newcommand{\B}{\field{B}}
\newfont{\bbb}{msbm10 scaled 500}
\newfont{\bb}{msbm10 scaled 1100}
\newcommand{\av}{{\bf a}}
\newcommand{\cv}{{\bf c}}
\newcommand{\ev}{{\bf e}}
\newcommand{\fv}{{\bf f}}
\newcommand{\pv}{{\bf p}}
\newcommand{\rv}{{\bf r}}
\newcommand{\Am}{{\bf A}}
\newcommand{\Gm}{{\bf G}}
\newcommand{\Hm}{{\bf H}}
\newcommand{\Id}{{\bf I}}
\newcommand{\Cc}{{\cal C}}
\newcommand{\Dc}{{\cal D}}
\newcommand{\Ec}{{\cal E}}
\newcommand{\Mc}{{\cal M}}
\newcommand{\Rc}{{\cal R}}
\newcommand{\Sc}{{\cal S}}
\newcommand{\Yc}{{\cal Y}}
\newcommand{\remove}[1]{}
\theoremstyle{definition}
\theoremstyle{remark}
\newcommand{\latexe}{{\LaTeX\kern.125em2%
                      \lower.5ex\hbox{$\varepsilon$}}}
\chardef\bslash=`\\	
\def\square{\RIfM@\bgroup\else$\bgroup\aftergroup$\fi
\vcenter{\hrule\hbox{\vrule\@height.6em\kern.6em\vrule}
\hrule}\egroup}\makeatother\makeindex
\definecolor{OXO-emph}{RGB}{153,0,0}
\DeclareMathAlphabet{\mathpzc}{OT1}{pzc}{m}{it}
\newcolumntype{?}{!{\vrule width 1.5pt}}
\title{A Note on Secure Minimum Storage Regenerating Codes}
\author{Ankit Singh Rawat}%
\affil{Computer Science Department, \\ Carnegie Mellon University, \\Pittsburgh, 15213.\\ {E-mail:~asrawat@andrew.cmu.edu}}
\begin{document}

\maketitle



\begin{abstract} 
This short note revisits the problem of designing secure minimum storage regenerating (MSR) codes for distributed storage systems. A secure MSR code ensures that a distributed storage system does not reveal the stored information to a passive eavesdropper. The eavesdropper is assumed to have access to the content stored on $\ell_1$ number of storage nodes in the system and the data downloaded during the bandwidth efficient repair of an additional $\ell_2$ number of storage nodes. This note combines the Gabidulin codes based precoding~\cite{RKSV12} and a new construction of MSR codes (without security requirements) by Ye and Barg~\cite{YeB16a} in order to obtain secure MSR codes. Such optimal secure MSR codes were previously known in the setting where the eavesdropper was only allowed to observe the repair of $\ell_2$ nodes among a specific subset of $k$ nodes~\cite{RKSV12, GRCP13}. The secure coding scheme presented in this note allows the eavesdropper to observe repair of any $\ell_2$ ouf of $n$ nodes in the system and characterizes the secrecy capacity of linear repairable MSR codes. 
\end{abstract}


\section{Introduction}
\label{sec:intro}

Consider a distributed storage system that stores a file $\fv$ of size $\Mc$ (symbols over a finite field $\F$) on a network of $n$ storage nodes such that the file $\fv$ can be reconstructed from the content of any $k$ out of $n$ nodes in the system. In \cite{dimakis}, Dimakis et al. study the issue of recovering the content stored in a node by downloading a small amount of data from the remaining nodes in the system. This problem is referred to as the {\em node repair} problem. The ability to conduct node repair is useful in maintaining the redundancy level of the system in the event of a node failure. Moreover, the content of a temporarily unavailable node can be accessed using the rest of the (available) nodes in the system by treating the unavailable node as a failure and invoking the mechanism to repair this node. Dimakis et al. introduce {\em repair-bandwidth}, the number of symbols downloaded to repair a node, as a metric to quantify the efficiency of the node repair mechanism~\cite{dimakis}. Assuming that each node stores $\alpha$ symbols (over $\F$) and the node repair mechanism requires contacting $d \geq k$ storage nodes and downloading $\beta$ symbols from each of the contacted nodes, \cite{dimakis} presents the following fundamental trade-off between the repair-bandwidth $d\beta$ and per node storage $\alpha$.
\begin{align}
\label{eq:cutset}
\Mc \leq \sum_{i = 1}^{k}\min\big\{\alpha, (d - i + 1)\beta\big\}.
\end{align}
The codes that operate at this trade-off are referred to as {\em regenerating codes}~\cite{dimakis}. In particular, the codes corresponding to the minimum storage point, i.e., $\alpha = \frac{\Mc}{k}$, are called {\em minimum storage regenerating} (MSR) codes. Note that an MSR code is an MDS vector code~\cite{MacSlo} and operates at the following point on the trade-off defined by the bound in \eqref{eq:cutset}.
\begin{align}
\label{eq:msr}
\big(\alpha, \beta\big) =  \left(\frac{\Mc}{k}, \frac{\alpha}{d - k + 1}\right) = \left(\frac{\Mc}{k}, \frac{\Mc}{k(d - k + 1)}\right).
\end{align}
An MSR code is said to be {\em exact-repairable} if its repair mechanism ensures reconstruction of the data that is identical to the content stored on the node being repaired. The exact-repairable MSR codes form an attractive class of coding schemes as they preserve the structure of the storage system throughout the operation of the system. The problem of designing exact-repairable MSR codes has been extensively studied in \cite{RSK11, cadambe2011optimal, PapDimCad13, zigzag13, SAK15, RSE15, RKV16a, GFV16} and references therein. Recently, Ye and Barg present explicit constructions for exact-repairable MSR codes for all values of system parameters $n$, $k$ and $d$ in \cite{YeB16a, YeB16b}. These constructions enable repair of all nodes in the system as opposed to some of the earlier constructions (e.g. the constructions from \cite{zigzag13, cadambe2011optimal}) which enable bandwidth-efficient repair only for a particular set of $k$ (systematic) nodes.

In this document, we address the issue of designing distributed storage systems that protect the stored information against eavesdropping attacks. Given increasing utilization of distributed storage systems (a.k.a. cloud storage) for storing valuable and confidential information, it is important that these systems prevent leakage of the stored information to an unauthorized and (or) adversarial agent. In this paper, we present a coding scheme that is information theoretically secure against an eavesdropper who can observe the data downloaded during the repair of $\ell_2$ storage nodes and access the content stored on $\ell_1$ (additional) nodes. The secure coding scheme enables exact-repair of all nodes in the system with $d = n - 1$ and operates at the MSR point. The scheme is obtained by combining Gabidulin codes based precoding scheme~\cite{RKSV12} with a code construction presented in \cite{YeB16a}. We note that the obtained coding scheme characterizes the secrecy capacity~\cite{PRR11} of those distributed storage systems that employ linear repair mechanisms and operate at the MSR point.

\section{Background and related work}
\label{sec:background}

In this section we formally define the underlying eavesdropper model and the associated secrecy capacity. We then present a brief description two key components of our secure coding schemes: 1) Gabidulin precoding scheme and 2) a code construction from \cite{YeB16a}. We conclude the section with a discussion on the prior work in the area of designing secure coding schemes for distributed storage systems.

\subsection{System model}
\label{sec:sys}

We consider a DSS with $n$ storage nodes where each node stores $\alpha$ symbols over a finite field $\F$. We assume that the DSS employs a coding scheme such that content of any $k$ out of $n$ nodes in the system is sufficient to construct the content stored in the remaining $n - k$ nodes. Furthermore, we assume that the content of every node in the system can be {\em exactly} reconstructed by contacting any $d$ out of $(n - 1)$ remaining nodes and downloading $\beta$ symbols (over $\F$) from each of the contacted nodes. It follows from the Singleton bound that such a system can store a file with at most $k\alpha$ (independent) symbols (over $\F$). In fact, an MDS coding scheme does store a file $\fv$ of size $\Mc = k\alpha$ symbols (over $\F$). For such coding schemes, it follows from the work of Dimakis et al.~\cite{dimakis} that 
\begin{align}
\beta \geq \frac{\alpha}{d - k + 1}.
\end{align}
Here, we focus on the DSS employing those exact-repairable coding schemes that are both storage and repair-bandwidth efficient, i.e., we have that
\begin{align}
\label{eq:msr1}
(\alpha, \beta) = \Big(\frac{\Mc}{k}, \frac{\Mc}{k(d- k + 1)}\Big).
\end{align}

\subsection{Eavesdropper model and secrecy capacity}
\label{sec:eavesdropper}

We consider the $(\ell_1, \ell_2)$-eavesdropper model introduced in \cite{SRK11GCom}. Let $n$ nodes in the DSS are indexed by the set $[n] := \{1, 2,\ldots, n\}$. For $\ell_1$ and $\ell_2$ such that $\ell_1 + \ell_2 < k$, an $(\ell_1, \ell_2)$-eavesdropper can directly access the content stored on any $\ell_1$ storage nodes indexed by the set $\Ec_1 \subset [n]$. Additionally, the eavesdropper observes the data downloaded during the repair of any $\ell_2$ storage nodes indexed by the set $\Ec_2 \subset [n]$. The nodes indexed by the sets $\Ec_1$ and $\Ec_2$ are referred to as storage-eavesdropped and download-eavesdropped nodes, respectively. Note that a download-eavesdropped node may reveal more information compared to a storage-eavesdropped node as the content stored on a node is a function of the data downloaded during its repair. In this document we focus on coding schemes that are information theoretically secure against an $(\ell_1, \ell_2)$-eavesdropper. We formalize this notion in the following definition.

\begin{definition}
\label{def:secrecy}
Let $\fv^{s}$ be a secure file of size $\Mc^{s}$ symbols (over $\F$). We say that the DSS securely stores $\fv^{s}$ against an $(\ell_1, \ell_2)$-eavesdropper, if we have
$$
I(\fv^{s};\ev(\Ec_1, \Ec_2)) = 0~~~~~~\forall \Ec_1, \Ec_2 \subset [n]~\text{such that}~|\Ec_1| = \ell_1~\text{and}~|\Ec_2| = \ell_2.
$$
Here, $\ev(\Ec_1, \Ec_2)$ denotes the observations of an eavesdropped with its storage-eavesdropped and download-eavesdropped nodes indexed by the sets $\Ec_1$ and  $\Ec_2$, respectively. Equivalently, we also say that the DSS achieves a secure file size $\Mc^{s}$.
\end{definition}

\begin{remark}
\label{rem:secrecy}
The {\em secrecy capacity} of a DSS against an $(\ell_1, \ell_2)$-eavesdropper is defined as the maximum secure file size achieved by the DSS. In other words, the secrecy capacity of a DSS denotes the maximum sized secure file that it can store without leaking any information to an $(\ell_1, \ell_2)$-eavesdropper.
\end{remark}

\subsection{Preliminaries}
\label{sec:prelims}

As described in Section~\ref{sec:sys} and \ref{sec:eavesdropper}, we aim to store a secure file $\fv^{s}$ of size $\Mc^{s}$ symbols (over $\F$) in a DSS that stores $\Mc$ symbols (over $\F$) without any security guarantees. Moreover, the DSS is required to operate at the MSR point which is defined by the parameters given in \eqref{eq:msr1}. Towards this we utilize $\Mc - \Mc^{s}$ random symbols (over $\F$). The following lemma from \cite{Wyner:The75, SRK11GCom} allows us to argue that the proposed coding scheme is secure against an $(\ell_1, \ell_2)$-eavesdropper. 
\begin{lemma}[Secrecy Lemma~\cite{Wyner:The75, SRK11GCom}]
\label{lem:secrecy}
Let $\fv^{s}$ be the secure file that needs to be stored on a DSS and  $\rv$ be random symbols (independent of $\fv^{s}$). Let $\ev(\Ec_1, \Ec_2)$ be the observations of an eavesdropped with its storage-eavesdropped and download-eavesdropped nodes indexed by the sets $\Ec_1$ and  $\Ec_2$, respectively. If we have $H\big(\ev(\Ec_1, \Ec_2)\big) \leq H(\rv)$ and $H\big(\rv|\fv^{s},\ev(\Ec_1, \Ec_2)\big)=0$, then 
$$
I\big(\fv^{s};\ev(\Ec_1, \Ec_2)\big)=0.
$$
\end{lemma}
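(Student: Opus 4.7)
The plan is to chain the two hypotheses via standard entropy manipulations to force $H(\ev(\Ec_1,\Ec_2) \mid \fv^{s}) = H(\ev(\Ec_1,\Ec_2))$, which is exactly the conclusion $I(\fv^{s};\ev(\Ec_1,\Ec_2)) = 0$. For brevity, write $\ev$ for $\ev(\Ec_1,\Ec_2)$ throughout the sketch.

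First I would unpack the hypothesis $H(\rv \mid \fv^{s}, \ev) = 0$, which says that $\rv$ is a deterministic function of $(\fv^{s}, \ev)$. By the chain rule this gives $H(\rv, \fv^{s}, \ev) = H(\fv^{s}, \ev)$. Combining this with the monotonicity $H(\rv, \fv^{s}, \ev) \geq H(\rv, \fv^{s})$ and with the stipulated independence of $\rv$ and $\fv^{s}$, which gives $H(\rv, \fv^{s}) = H(\rv) + H(\fv^{s})$, I would obtain $H(\fv^{s}, \ev) \geq H(\rv) + H(\fv^{s})$. Expanding the left-hand side as $H(\fv^{s}) + H(\ev \mid \fv^{s})$ and canceling $H(\fv^{s})$ yields the key inequality $H(\ev \mid \fv^{s}) \geq H(\rv)$.

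Next I would invoke the second hypothesis $H(\ev) \leq H(\rv)$ together with the universal bound $H(\ev \mid \fv^{s}) \leq H(\ev)$ to produce the sandwich
\[
H(\ev \mid \fv^{s}) \;\geq\; H(\rv) \;\geq\; H(\ev) \;\geq\; H(\ev \mid \fv^{s}).
\]
Equality must hold throughout, so in particular $H(\ev \mid \fv^{s}) = H(\ev)$, which is $I(\fv^{s}; \ev) = 0$, as required.

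I do not anticipate a genuine obstacle: the argument is the classical Wyner-style secrecy squeeze and uses only the chain rule, the fact that conditioning reduces entropy, and the independence of $\rv$ from $\fv^{s}$. The one subtlety worth flagging is that this last independence is what allows the step $H(\rv, \fv^{s}) = H(\rv) + H(\fv^{s})$; without it, the lower bound $H(\ev \mid \fv^{s}) \geq H(\rv)$ collapses and the sandwich argument fails. Since the lemma's hypothesis explicitly states $\rv$ is independent of $\fv^{s}$, this is available for free.
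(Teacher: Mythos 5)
Your proof is correct. The paper itself does not prove this lemma---it is imported verbatim from the cited references---so there is no in-paper argument to compare against; your derivation is the standard one from those sources (unpack $H(\rv\mid\fv^{s},\ev)=0$ via the chain rule, use independence to get $H(\ev\mid\fv^{s})\geq H(\rv)$, then squeeze against $H(\ev)\leq H(\rv)$ and the fact that conditioning reduces entropy), and every step is valid.
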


\subsubsection{Gabidulin precoding}
\label{sec:precoding}

Given a vector $\av = (a_1, a_2,\ldots, a_K) \in \F^K$ and $K$ points $\Yc = \{y_1, y_2,\ldots, y_K\} \subseteq \F^K$ which are linearly independent (over a subfield $\B$ of $\F$), the Gabidulin precoding of $\av$ is obtained in two steps.
\begin{itemize}
\item First, construct a linearized polynomial $m_{\av}(x)$ with the vector $\av$ defining its coefficients as follows.
\begin{align}
\label{eq:linearized_poly}
m_{\av}(x) = \sum_{i = 0}^{K - 1}a_{i+1}x^{|\B|^{i}} = \sum_{i = 0}^{K - 1}a_{i+1}x^{[i]},
\end{align}
where, for a positive integer $i$, we use $x^{[i]}$ to denote $x^{|\B|^i}$.
\item Evaluate the linearized polynomial $m_{\av}(x)$ at the given set of $K$ points $\Yc = \{y_1, y_2,\ldots, y_K\} \subseteq \F^K$ to obtain the associated Gabidulin precoded vector
\begin{align}
\pv(\av; \Yc) = \big(m_{\av}(y_1), m_{\av}(y_2),\ldots, m_{\av}(y_K)\big) \in \F^K.
\end{align} 
\end{itemize}

\subsubsection{Ye and Barg construction~\cite{YeB16a}}
\label{sec:YeBarg}
In \cite{YeB16a}, Ye and Barg present multiple code constructions for the MSR codes for all values of $n$, $k$ and $d$. These are the first fully explicit constructions of this nature. Here, we briefly describe one of the constructions from \cite{YeB16a} which we utilize to construct secure coding schemes at the MSR point. Similarly, other constructions from \cite{YeB16a} can also be utilized to obtain secure coding scheme.

\begin{construction}
\label{const:YeBarg}
For an element $a \in \big\{0, 1,\ldots, (n - k)^{n-1}\big\}$, $$(a_{n-1}, a_{n - 2},\ldots,a_1) \in \{0, 1,\ldots, n - k - 1\}^{n-1}$$ denotes the $(n-k)$-ary vector representation of the element $a$. For $a \in \big\{0, 1,\ldots, (n - k)^{n-1}\big\}$, $u \in \{0, 1,\ldots, n - k -1\}$ and $i \in [n-1]$, $a(i, u) \in \big\{0, 1,\ldots, (n - k)^{n-1}\big\}$ denotes the element with the following $(n-k)$-ary vector representation.
\begin{align}
(a_{n-1}, a_{n-2},\ldots, a_{i + 1}, u, a_{i - 1},\ldots, a_{1}) \in \{0, 1,\ldots, n - k - 1\}^{n-1}.
\end{align}
Let $\B$ be a field with $|\B| \geq n + 1$ and $\gamma \in \B$ be its primitive element. An MSR code $\Cc$ with $\alpha = (n - k)^{n-1}$ is defined by the following $(n-k)\alpha \times n\alpha$ parity check matrix.
\begin{align}
\label{eq:Hm}
\Hm = \left( \begin{array}{ccccc}
\Id & \Id & \cdots & \Id & \Id\\
\Am_1 & \Am_{2} & \cdots & \Am_{n-1} & \Id \\
\Am^2_1 & \Am^2_{2} & \cdots & \Am^2_{n-1} & \Id \\
\vdots & \vdots & \ddots & \vdots & \vdots \\
\Am^{n-k-1}_1 & \Am^{n-k-1}_{2} & \cdots & \Am^{n-k-1}_{n-1} & \Id \\
\end{array} \right) \in \B^{(n-k)\alpha \times n\alpha},
\end{align}
where $\Id$ denotes the $\alpha \times \alpha$ identity matrix. For $i \in [n - 1]$, $\Am_i$ is an $\alpha \times \alpha$ matrix which is defined as follows.
\begin{align}
\Am_i  = \sum_{a = 0}^{(n - k)^{n-1} - 1}\lambda_{i, a_i}\ev_{a}\ev^T_{a(i, a_i \oplus 1)} \in \B^{\alpha \times \alpha},
\end{align}
where $\oplus$ denotes addition modulo $(n - k)$, $\lambda_{i, 0} = \gamma^{i}$ and $\lambda_{i, u} = 1,~\forall~u \in [n-k-1]$. Here, $\{\ev_{a}\}_{a \in \big\{0, 1,\ldots, (n - k)^{n-1}\big\}}$ denotes the collection of $\alpha = (n - k)^{n-1}$ standard basis vectors in $\B^{\alpha}$, i.e., all but $a$-th coordinate of the vector $\ev_{a}$ are equal to zero and the $a$-th coordinate has its entry equal to $1$. 
\end{construction}

Let $\cv = (\cv_1, \cv_2,\ldots, \cv_n) \in \Cc \subseteq \B^{n\alpha}$ denote a codeword of the MSR code defined by the parity check matrix $\Hm$ (cf.~\eqref{eq:Hm}), i.e., 
\begin{align}
\Hm \cv = 0.
\end{align}
For $i \in [n]$, we have that
$$
\cv_i = \big(c_{i, 0}, c_{i, 1},\ldots, c_{i, \alpha-1}\big) \in \B^{\alpha},
$$
which denotes the $\alpha = (n - k)^{n-1}$ symbols stored on the $i$-th storage node in the system. In \cite{YeB16a}, Ye and Barg show that the code defined by $\Hm$ is an MDS array code, i.e.,
$$
|\Cc| = \big | \big\{\cv =  (\cv_1, \cv_2, \ldots, \cv_n) \in \B^{n\alpha}~:~\Hm \cv = 0 \big\} \big | = |\B|^{k\alpha}
$$
and for any codeword $(\cv_1, \cv_2, \ldots, \cv_n) \in \Cc$ and any set $\Sc = \big\{i_1, i_2,\ldots, i_{k} \big\} \subseteq [n]$, the $k\alpha$ symbols $\big(\cv_{i_1}, \cv_{i_2},\ldots, \cv_{i_k}\big)$ are sufficient to reconstruct the entire codeword $(\cv_1, \cv_2,\ldots, \cv_n)$. Furthermore, Ye and Barg establish that the code $\Cc$ is an MSR code with $d = n - 1$, i.e., for any $(\cv_1, \cv_2,\ldots, \cv_n) \in \Cc$ and $i \in [n]$, the $\alpha$ symbols stored on the $i$-th node $\cv_i$ can be reconstructed by downloading $\frac{\alpha}{n-k}$ symbols (over $\B$) from each of the remaining $n-1$ nodes. In particular, for $i \in [n-1]$, $\cv_i$ can be reconstructed by downloading the following symbols.
\begin{align}
\label{eq:node1}
\big\{c_{j, a}~:~j \neq i~\text{and}~a_i = 0\big\}.
\end{align}
Similarly, $\cv_n$ can be reconstructed by downloading the following symbols.
\begin{align}
\label{eq:node2}
\big\{c_{j, a}~:~j \neq n~\text{and}~a_1 \oplus a_2 \oplus \cdots \oplus a_{n - 1} = 0\big\}.
\end{align}
We refer the reader to \cite{YeB16a} for the further details of the construction.

\subsection{Related work}
\label{sec:prior}

Pawar et al. formally begin the study of the problem of designing coding schemes for DSS that are secure against passive eavesdropping attacks in \cite{PRR11}. For a distributed storage system that has per node storage $\alpha$ and requires downloading $\beta$ symbols from $d$ intact nodes during the repair of a failed node, Pawar at al. obtain the following upper bound on its secrecy capacity~\cite{PRR11}.
\begin{align}
\label{eq:pawar_bound}
\Mc^{s} \leq \sum_{i = \ell_1 + \ell_2 +1}^{k}\min\big\{\alpha, (d-i+1)\beta\big\}.
\end{align}
Recall that we are only considering those distributed storage systems where the content of any $k$ out of $n$ storage nodes is sufficient to reconstruct the entire stored information. In \cite{SRK11GCom}, Shah et al. utilize the product-matrix construction~\cite{RSK11} for minimum bandwidth regenerating (MBR) codes to construct coding schemes that are secure against an $(\ell_1, \ell_2)$-eavesdropper for all values of $\ell_1$ and $\ell_2$ such that $\ell_1 + \ell_2 < k$. These coding schemes operate at $\alpha = d\beta$ and attain the bound on the secrecy capacity in \eqref{eq:pawar_bound}. Shah et al. also utilize the product-matrix construction for MSR codes to design secure MSR coding schemes that achieves secure file size of 
\begin{align}
\label{eq:shahMSR}
\Mc^{s} = (k - \ell_1 + \ell_2)(\alpha - \ell_2 \beta)
\end{align}
against an $(\ell_1, \ell_2)$-eavesdropper~\cite{SRK11GCom}. Note that, for $\ell_2 \geq 1$, there is a gap between the bound in \eqref{eq:shahMSR} and the secure file size achieved in \eqref{eq:shahMSR}. In \cite{RKSV12}, Rawat et al. obtained an improved bound on the secure file size achievable at the MSR point.
\begin{align}
\label{eq:rawatMSR}
\Mc^{s} \leq \sum_{i = \ell_1 + 1}^{k - \ell}\big(\alpha - H(\Dc_{i, \Ec_2})\big),
\end{align}
where $\Ec_2$ denotes the set $\ell_2$ download-eavesdropped nodes and $\Dc_{i, \Ec_2}$ denotes the data sent by the $i$-th node for the repair of the storage nodes indexed by the set $\Ec_2$. Furthermore, for linear repair schemes with $d = n - 1$ and $\ell_2 \leq 2$, the bound in \eqref{eq:rawatMSR} specializes to the following~\cite{RKSV12}.
\begin{align}
\label{eq:rawatMSR2}
\Mc^{s} \leq (k - \ell_1 + \ell_2)\left(1 - \frac{1}{n-k}\right)^{\ell_2}\alpha.
\end{align}
In \cite{GRCP13}, Goparaju et al. show that the bound in \eqref{eq:rawatMSR2} holds for all values of $\ell_2$. They further generalize this bound and show that for linear repair schemes with $k \leq d \leq n - 1$, the secure file size achievable at the MSR point satisfies the following~\cite{GRCP13}.
\begin{align}
\label{eq:goparajuMSR}
\Mc^{s} \leq (k - \ell_1 + \ell_2)\left(1 - \frac{1}{d-k+1}\right)^{\ell_2}\alpha.
\end{align}
As for the achievability schemes, for $d = n - 1$, Rawat et al. obtain a secure coding scheme at the MSR point that attain the bound in \eqref{eq:rawatMSR2} provided that, whenever $\ell_2 \geq 2$, download-eavesdropped nodes are restricted to a fixed set of $k$ nodes among the $n$ nodes in the system. This coding scheme is obtained by combining the Gabidulin precoding (cf.~Section~\ref{sec:precoding}) with the zigzag codes from \cite{zigzag13}. We also note that for $\ell_2 = 1$, the secure coding scheme from \cite{SRK11GCom} is optimal as it attains the bound in \eqref{eq:rawatMSR}. Recently, Huang et al. further explore the problem of characterizing the secrecy capacity of MSR codes in \cite{HPX15a}. For $\beta = 1$, the secure files size in \eqref{eq:shahMSR} is shown to be optimal~\cite{HPX15a, SKSRR14}. Huang et al. show that optimality of the bound in \eqref{eq:shahMSR} for the MSR codes with the bounded values of $\beta$. The problem of obtaining bounds on the secrecy capacity of distributed storage systems is also studied in \cite{Tandon16} under the non black-box version of the problem. For brevity, we skip a discussion on this and refer the reader to \cite{Tandon16, GRC15}.

In this paper, we establish that for linear repairable DSS with $d = n - 1$, the bound in \eqref{eq:rawatMSR2} is the exact characterization of the secrecy capacity of an MSR code. One of the codes constructions of MSR codes from \cite{YeB16a} (cf.~Section~\ref{sec:YeBarg}) enables us to remove the restriction appearing in the secure coding scheme from \cite{RKSV12} that the download-eavesdropped nodes be restricted to a subset of $k$ nodes. As for the possibility of attaining a larger secure file size by utilizing non-linear repair schemes, Goparaju et al. show Pareto optimality of the linear repairable MSR codes among those MSR codes that simultaneously allow for all values of $\ell_2$ during the design of a secure coding scheme operating at the MSR point~\cite{GRC15}.

The cooperative regenerating codes enable simultaneous bandwidth efficient repair of multiple node failures~\cite{ShumHu, Kermarrec:Repairing11}. Security of DSS employing cooperative regenerating codes against passive eavesdropping attacks is explored in \cite{KRV12, HPX15b}. Locally repairable codes (LRCs) is another class of codes designed to be employed in distributed storage systems~\cite{Gopalan12, PapDim12}. These codes aim at repairing a failed node by contacting a small number of surviving nodes in the system. We note that the problem of designing secure locally repairable codes against passive eavesdropping attacks is considered in \cite{RKSV12, AgaMaz15}.


\section{Secure MSR codes}
\label{sec:secureMSR}

In this section we present a linear repairable coding scheme that operates at the MSR point with $d = n - 1$ and achieve the optimal secure file size in this setting.

\begin{construction}
\label{const:secureMSR}
Let $n$ and $k$ be given system parameters. Let $\fv^{s}$ be a secure file of size 
\begin{align}
\label{eq:filesecure}
\Mc^{s} = (k - \ell_1 - \ell_2) \left(1 - \frac{1}{n - k}\right)^{\ell}(n - k)^{n-1} 
\end{align}
symbols (over a finite field $\F$). We assume that we have $\F = \B^{Q}$, where $Q \geq k(n - k)^{n-1}$ (cf.~\eqref{eq:filesecure}) and $|\B| \geq n + 1$. Let $\alpha = (n - k)^{n-1}$ and $\Mc = k\alpha = k(n - r)^{n-1}$. We now generate a coding scheme that securely stores the file $\fv^{s}$ against an $(\ell_1, \ell_2)$-eavesdropped in the following two step process.
\begin{enumerate}
\item Let $\rv$ denote $\Mc - \Mc^{s}$ i.i.d. random symbols (independent of $\fv^{s}$) that are uniformly distributed over $\F$. We take $\Mc$ linearly independent point (over $\B$) $\Yc = \big\{y_1, y_2,\ldots, y_{\Mc}\big\} \subset \F$ and perform Gabidulin precoding of the vector $\av = (\ev, \fv^{s}) \in \F^{\Mc}$ as defined in Section~\ref{sec:precoding}. Let $\fv$ denote the precoded vector, i.e., 
\begin{align}
\fv = \pv(\av, \Yc) = \big(m_{\av}(y_1), m_{\av}(y_2),\ldots, m_{\av}(y_{\Mc})\big) \in \F^{\Mc}, 
\end{align}
where $m_{\av}(x)$ is the linearized polynomial associated with the vector $\av = (\rv, \fv^{s})$ as define in \eqref{eq:linearized_poly}, i.e.,
\begin{align}
\label{eq:linear_poly}
m_{\av}(x) = \sum_{i = 0}^{\Mc - 1}a_{i+1}x^{[i]}.
\end{align}
\item Let $\Gm$ be a $k\alpha \times n\alpha$ generator matrix for the MSR code $\Cc$ with $d = n - 1$ and $\alpha = (n - k)^{n-1}$ obtained by Construction~\ref{const:YeBarg} (cf.~Section~\ref{sec:YeBarg}), i.e.,
\begin{align}
\Gm\Hm^{T} = \mathbf{0},
\end{align}
where $\mathbf{0}$ denotes the $k\alpha$-length zero vector. Given the precoded vector $\fv$ from the previous stage, we obtain the associated code vector in $\Cc$ as follows.
\begin{align}
\label{eq:codeword}
(\cv_1, \cv_2,\ldots, \cv_n) = \fv \cdot \Gm \in \Cc \subseteq \F^{n\alpha}.
\end{align}
\end{enumerate}
For $i \in [n]$, the $i$-th storage node stores the $\alpha = (n - k)^{n-1}$ symbols in the subvector $\cv_i$ (cf.~\eqref{eq:codeword}).
\end{construction}

The repairability of the proposed coding scheme with the repair-bandwidth $\Big(\frac{n-1}{n-k}\Big)\cdot\alpha$ symbols (over $\F$) follows from the repairability of the code $\Cc$ (cf.~\ref{sec:YeBarg}).  Next, we argue that the proposed coding scheme is secure against an $(\ell_1, \ell_2)$-eavesdropper. Towards this, we present the following simple lemma.

\begin{lemma}
\label{lem:simple}
Let $\Sc \subseteq [n]$. For $j \in [n]\backslash \Sc$, let $\Dc_{j, \Sc}$ denote the symbols downloaded from the $j$-th storage node during the repair of the storage nodes indexed by the set $\Sc$. Then, we have
\begin{align}
\big|\Dc_{j, \Sc}\big| = \left(1 - \left(1 - \frac{1}{n - k}\right)^{|\Sc|}\right)\cdot (n - k)^{n-1}.
\end{align}
\end{lemma}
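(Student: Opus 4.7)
The plan is to work by complementation: I will count the number of indices $a \in \{0,1,\ldots,\alpha-1\}$ for which the symbol $c_{j,a}$ is \emph{not} pulled from node $j$ during the repair of any node in $\Sc$, and then subtract from $\alpha = (n-k)^{n-1}$. Recalling equations~\eqref{eq:node1} and \eqref{eq:node2} of the Ye--Barg construction, a symbol $c_{j,a}$ is used to repair node $i \in [n-1]$ iff $a_i = 0$, and it is used to repair node $n$ iff $a_1 \oplus a_2 \oplus \cdots \oplus a_{n-1} = 0$. Therefore $c_{j,a} \notin \Dc_{j,\Sc}$ iff $a_i \neq 0$ for every $i \in \Sc \cap [n-1]$, together with $a_1 \oplus \cdots \oplus a_{n-1} \neq 0$ in the case $n \in \Sc$.

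I will split into two cases. When $n \notin \Sc$, the $(n-k)$-ary representation of $a$ must have a nonzero entry in each of the $|\Sc|$ coordinates indexed by $\Sc$, and is unconstrained in the remaining $n-1-|\Sc|$ coordinates; this immediately gives $(n-k-1)^{|\Sc|}(n-k)^{n-1-|\Sc|}$ non-downloaded indices. When $n \in \Sc$, I will first fix nonzero values for the $|\Sc|-1$ coordinates indexed by $\Sc \setminus \{n\}$ (in $(n-k-1)^{|\Sc|-1}$ ways) and then count the ways to fill in the remaining $n-|\Sc| \geq 1$ coordinates so that the total $\oplus$-sum of all $n-1$ entries is nonzero modulo $n-k$.

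The one small lemma I need here is that for any fixed target residue $r$ and any length $t \geq 1$, the number of tuples in $\{0,\ldots,n-k-1\}^{t}$ with coordinate sum $\equiv r \pmod{n-k}$ is exactly $(n-k)^{t-1}$ (uniformity of the sum modulo $n-k$, which follows because fixing any $t-1$ coordinates leaves exactly one valid value for the last). This uniformity step is the only mildly nontrivial ingredient; once it is in hand, the count of ``good'' completions in Case~$n \in \Sc$ is $(n-k)^{n-|\Sc|} - (n-k)^{n-|\Sc|-1} = (n-k-1)(n-k)^{n-|\Sc|-1}$, and multiplying by $(n-k-1)^{|\Sc|-1}$ gives $(n-k-1)^{|\Sc|}(n-k)^{n-1-|\Sc|}$, exactly as in the first case.

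Combining the two cases,
\[
\big|\Dc_{j,\Sc}\big| \;=\; (n-k)^{n-1} \;-\; (n-k-1)^{|\Sc|}(n-k)^{n-1-|\Sc|} \;=\; \left(1 - \left(1 - \frac{1}{n-k}\right)^{|\Sc|}\right)(n-k)^{n-1},
\]
as claimed. The only place I expect to have to be careful is the Case~$n \in \Sc$ counting, because one has to correctly account for the joint constraint that the $\Sc \setminus \{n\}$ coordinates are nonzero \emph{and} the full sum avoids one forbidden residue; the uniformity observation above cleanly decouples these by allowing the nonzero-coordinate choices to be made first and the remaining coordinates to absorb the modular constraint.
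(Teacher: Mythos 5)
Your proof is correct and takes essentially the same route as the paper's: count the complement (the indices $a$ with $a_i\neq 0$ for all $i\in\Sc\cap[n-1]$, plus the nonzero-modular-sum condition when $n\in\Sc$) and split into the same two cases. Your explicit uniformity lemma for the sum modulo $n-k$ cleanly justifies the Case~$n\in\Sc$ count that the paper merely asserts, and it yields the correct factor $\frac{n-k-1}{n-k}$ where the paper's displayed intermediate step shows $\frac{n-k+1}{n-k}$ (evidently a typo, since the paper's final line matches your answer).
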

\begin{proof}
We divide the proof in two cases:
\begin{itemize}
\item {\bf Case~$1$~($n \notin \Sc$):}~It follows from \eqref{eq:node1} that
\begin{align}
\big|\Dc_{j, \Sc}\big| &= \Big |\bigcup_{i \in \Sc} \big\{c_{j, a}~;~a_{i} = 0\big\}\Big|  \nonumber \\
& = (n  - k)^{n - 1} - \Big |\bigcap_{i \in \Sc} \big\{c_{j, a}~;~a_{i} \neq 0\big\}\Big| \nonumber \\
& = (n - k)^{n - 1} - (n - k)^{n - 1 - |\Sc|}\cdot (n - k - 1)^{|\Sc|} \nonumber \\
&=  \left(1 - \left(1 - \frac{1}{n - k}\right)^{|\Sc|}\right)\cdot (n - k)^{n-1}.
\end{align}
\item {\bf Case~$2$~($n \in \Sc$):}~Let's define $\widetilde{\Sc} = \Sc \backslash \{n\}$. It follows from \eqref{eq:node1} and \eqref{eq:node2} that
\begin{align}
\big|\Dc_{j, \Sc}\big| &= \Big |\bigcup_{i \in \widetilde{\Sc}} \big\{c_{j, a}~;~a_{i} = 0\big\} \cup \big\{c_{j, a}~:~a_1 \oplus a_2 \oplus \cdots \oplus a_{n - 1} = 0\big\}\Big|  \nonumber \\
& = (n  - k)^{n - 1} - \Big |\bigcap_{i \in \widetilde{\Sc}} \big\{c_{j, a}~;~a_{i} \neq 0\big\} \cap \big\{c_{j, a}~:~a_1 \oplus a_2 \oplus \cdots \oplus a_{n - 1} \neq 0\big\} \Big| \nonumber \\
& = (n - k)^{n - 1} - (n - k)^{n - 1 - |\widetilde{\Sc}|}\cdot (n - k - 1)^{|\widetilde{\Sc}|}\cdot \left(\frac{n - k + 1}{n - k}\right) \nonumber \\
&=  \left(1 - \left(1 - \frac{1}{n - k}\right)^{|\Sc|}\right)\cdot (n - k)^{n-1}.
\end{align}
\end{itemize}
This completes the proof.
\end{proof}
\begin{proposition}
\label{prop:secure}
The coding scheme described in Construction~\ref{const:secureMSR} is secure against an $(\ell_1, \ell_2)$-eavesdropper.
\end{proposition}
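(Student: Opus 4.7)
I plan to invoke Lemma~\ref{lem:secrecy}, which reduces the task to verifying (i) $H\bigl(\ev(\Ec_1, \Ec_2)\bigr) \le H(\rv)$ and (ii) $H\bigl(\rv \mid \fv^s, \ev(\Ec_1, \Ec_2)\bigr) = 0$. Since $\Gm$ has entries in $\B$ and $m_{\av}$ is $\B$-linear on $\F$, every observed coordinate $c_{j,a}$ is of the form $m_{\av}(z_{j,a})$ for a specific $z_{j,a}\in\F$ that is the $\B$-linear combination of $y_1,\ldots,y_{\Mc}$ determined by the column of $\Gm$ indexed by $(j,a)$. Hence $H(\ev(\Ec_1, \Ec_2)) \le r\,\log|\F|$, where $r$ is the $\B$-rank of the submatrix of $\Gm$ formed by the observed columns. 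The target bound in (i) is then $r \le \Mc - \Mc^s$.

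For this rank bound, I would organize the observations into three groups: (a) the full content of the $\ell_1$ nodes in $\Ec_1$, contributing $\ell_1\alpha$ coordinates; (b) the symbols $\Dc_{j,\Ec_2}$ downloaded from each helper $j \in [n]\setminus(\Ec_1\cup\Ec_2)$ during the repair of $\Ec_2$, contributing $(n-\ell_1-\ell_2)\bigl(1-(1-\tfrac{1}{n-k})^{\ell_2}\bigr)\alpha$ coordinates by Lemma~\ref{lem:simple}; and (c) the symbols downloaded from helpers inside $\Ec_2$ itself during the repair of the other $\Ec_2$-nodes, together with the full content of $\Ec_2$ reconstructed by the Ye--Barg repair. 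Using the MDS property of the Ye--Barg code together with the hypothesis $\ell_1+\ell_2 < k$, I would argue by a dimension count that the coordinates in (c) lie in the $\B$-span of (a) and (b); thus the $\B$-rank of the observations is no larger than the combined count $\ell_1\alpha + (n-\ell_1-\ell_2)\bigl(1-(1-\tfrac{1}{n-k})^{\ell_2}\bigr)\alpha$. Finally, Bernoulli's inequality $(n-k)\bigl(1-(1-\tfrac{1}{n-k})^{\ell_2}\bigr) \le \ell_2$, together with a routine rearrangement, upgrades this to $r \le \bigl(k - (k-\ell_1-\ell_2)(1-\tfrac{1}{n-k})^{\ell_2}\bigr)\alpha = \Mc - \Mc^s$.

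For (ii), I would use the inversion property of linearized polynomials. Given $\fv^s$, each observed value $m_{\av}(z_{j,a})$ can be reduced, by subtracting the known contribution from the last $\Mc^s$ coefficients of $\av$, to an evaluation of $m_{\rv}(x) = \sum_{i=0}^{\Mc - \Mc^s - 1} r_{i+1}\,x^{[i]}$. Since the observations can be taken to supply $\Mc - \Mc^s$ $\B$-linearly independent evaluation points $z_{j,a}$, the $\Mc - \Mc^s$ coefficients of $m_{\rv}$---that is, $\rv$---are uniquely recoverable by standard linearized-polynomial interpolation. Combining (i) and (ii) with Lemma~\ref{lem:secrecy} yields $I(\fv^s;\ev(\Ec_1, \Ec_2)) = 0$, as desired.

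The main technical obstacle is the rank-reduction step in (i), i.e.\ the claim that the coordinates in (c) lie in the $\B$-span of (a) and (b). This is the one place where the specific structure of the Ye--Barg MSR code---beyond the plain MDS property---is used: one must exploit the explicit parity checks of $\Hm$ in Construction~\ref{const:YeBarg} to show that any codeword $\cv \in \Cc$ vanishing on (a) and (b) also vanishes on $\Ec_2$, so that the affine class of codewords consistent with (a) and (b) is parameterized purely by the unobserved helper coordinates outside $\Ec_1\cup\Ec_2$. Once this structural fact is nailed down, the remaining ingredients---Lemma~\ref{lem:simple}, Bernoulli's inequality, and linearized-polynomial inversion---are essentially routine.
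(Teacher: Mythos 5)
Your overall architecture (reduce to the two conditions of Lemma~\ref{lem:secrecy}, count the eavesdropper's observations via Lemma~\ref{lem:simple}, and recover $\rv$ by linearized-polynomial interpolation) matches the paper, and your part (ii) is essentially the paper's argument. However, the step you yourself flag as the main obstacle in part (i) --- that the coordinates in your group (c) lie in the $\B$-span of (a) and (b) --- is false for $\ell_2 \geq 2$, and it cannot be repaired by ``exploiting the parity checks.'' Recovering $\cv_i$ for $i \in \Ec_2$ from (a) and (b) alone amounts to repairing node $i$ using only the $n-\ell_2$ helpers in $\Ec_1 \cup \bigl([n]\setminus(\Ec_1\cup\Ec_2)\bigr)$, whereas Construction~\ref{const:YeBarg} has $d = n-1$: the repair of node $i$ genuinely needs the cross terms $\Dc_{j,i}$, $j \in \Ec_2\setminus\{i\}$, which belong to group (c) itself. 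Concretely, for $\ell_1=0$, $\ell_2=2$, the parity-check equations underlying the two repairs contain only $2\alpha - \alpha/(n-k)$ distinct equations in the $2\alpha$ unknowns $(\cv_{i_1},\cv_{i_2})$, leaving a $\beta$-dimensional ambiguity. There is also an internal inconsistency: your (i) would give $r \le \ell_1\alpha + (n-\ell_1-\ell_2)\bigl(1-(1-\tfrac1{n-k})^{\ell_2}\bigr)\alpha$, which is \emph{strictly} smaller than $\Mc-\Mc^s$ when $\ell_2\ge 2$ (by exactly $\alpha/(n-k)$ in the example above), and then your (ii) could not hold --- one cannot recover $\Mc-\Mc^s$ independent uniform random symbols from an observation of strictly smaller rank. (It would also contradict the tightness of the converse bound \eqref{eq:rawatMSR2}.)

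The paper avoids this trap by taking the opposite decomposition: it puts the \emph{full} contents $\{\cv_i : i\in\Ec_1\cup\Ec_2\}$ into the ``basis,'' contributing $(\ell_1+\ell_2)\alpha$, and then invokes \cite[Lemma~5]{HPX15a} to reduce the conditional entropy of the remaining repair data to that coming from only $k-\ell_1-\ell_2$ helpers (a set $\Rc$ disjoint from $\Ec_1\cup\Ec_2$), rather than all $n-\ell_1-\ell_2$ of them. Combined with Lemma~\ref{lem:simple} this gives exactly $(\ell_1+\ell_2)\alpha + (k-\ell_1-\ell_2)\bigl(1-(1-\tfrac1{n-k})^{\ell_2}\bigr)\alpha = \Mc-\Mc^s$, with no slack and no need for Bernoulli's inequality; the same $\Mc-\Mc^s$ symbols are then shown, via \cite[Lemma~9]{RKSV12}, to be evaluations of $m_{\av}$ at $\B$-linearly independent points, which is precisely what your part (ii) needs. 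To fix your proof, replace the span claim for (c) by this conditioning argument (or an equivalent reduction from $n-\ell_1-\ell_2$ helpers to $k-\ell_1-\ell_2$ helpers given $\cv_{\Ec_1\cup\Ec_2}$).
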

\begin{proof}
The proof of this proposition is very similar to the proof of \cite[Theorem 18]{RKSV12}. Let $\Ec_1$ and $\Ec_2$ denote the indices of the storage-eavesdropped and download-eavesdropped nodes, respectively. Let's consider a set of $k - |\Ec_1| - |\Ec_2| = k - \ell_1 - \ell_2$ storage nodes $\Rc$ such that $\Rc \cap \big\{\Ec_1 \cup \Ec_2\big\} = \emptyset$. Let $\ev\big(\Ec_1, \Ec_2\big)$ denote the symbols observed by the eavesdropper. Note that
\begin{align}
\label{eq:eve2}
\ev\big(\Ec_1, \Ec_2\big) = \big\{\cv_i~:~i \in \Ec_1\big\} \bigcup  \Big\{ \bigcup_{i \in \Ec_2}\big\{\cup_{j \neq i} \Dc_{j, i} \big\} \Big \}.
\end{align}
Consider 
\begin{align}
\label{eq:cond1}
H\big( \ev\big(\Ec_1, \Ec_2\big) \big) &= H\Big(\big\{\cv_i~:~i \in \Ec_1\big\} \bigcup  \Big\{ \bigcup_{i \in \Ec_2}\big\{\cup_{j \neq i} \Dc_{j, i} \big\} \Big\}\Big) \nonumber \\
&\overset{(a)}{=} H\Big(\big\{\cv_i~:~i \in \Ec_1\cup \Ec_2\big\} \bigcup  \Big\{ \bigcup_{i \in \Ec_2}\big\{\cup_{j \neq i} \Dc_{j, i} \big\} \Big\}\Big) \nonumber \\
&= H\left(\big\{\cv_i~:~i \in \Ec_1\cup \Ec_2\big\}\right) + H\left( \bigcup_{i \in \Ec_2}\big\{\cup_{j \neq i} \Dc_{j, i} \big\} \Big\} \Big| \big\{\cv_i~:~i \in \Ec_1\cup \Ec_2\big\} \right) \nonumber \\
&\overset{(b)}{=} (\ell_1 + \ell_2)\alpha + H\Big(\bigcup_{i \in \Ec_2}\big\{\cup_{j \in \Rc} \Dc_{j, i} \big\} \Big\}\Big) \nonumber \\
& = (\ell_1 + \ell_2)\alpha + H\Big(\cup_{j \in \Rc}\Dc_{j, \Ec_2}\Big) \nonumber \\
& \leq (\ell_1 + \ell_2)\alpha + \sum_{j \in \Rc}H\big(\Dc_{j, \Ec_2}\big) \nonumber \\
& \leq  (\ell_1 + \ell_2)\alpha + \sum_{j \in \Rc}\big|\Dc_{j, \Ec_2}\big| \nonumber \\
&\overset{(c)}{=} (\ell_1 + \ell_2)\cdot (n -k)^{n-1} + (k - \ell_1 - \ell_2)\cdot \left(1 - \left(1 - \frac{1}{n - k}\right)^{\ell_2}\right)\cdot (n - k)^{n-1} \nonumber \\
& = k\cdot(n-k)^{n-1} - (k - \ell_1 - \ell_2)\cdot\left(1 - \frac{1}{n - k}\right)^{\ell_2}\cdot (n - k)^{n-1} \nonumber \\
&= \Mc - \Mc^{s} = H(\rv).
\end{align}
where step $(a)$ follows from the fact that for $i \in \Ec_2$, $\cv_i$ is a function of the symbols in the set $\big\{\cup_{j \neq i} \Dc_{j, i} \big\}$. The steps $(b)$ and $(c)$ follow from \cite[Lemma~5]{HPX15a} and Lemma~\ref{lem:simple}, respectively.

Since $\Gm$ is a generator matrix of an MDS array code, it follows from \cite[Lemma 9]{RKSV12} that the symbols in the set 
\begin{align}
\label{eq:eve3}
\{\cv_i~:~i \in \Ec_1\cup \Ec_2\big\} \bigcup \Big\{\cup_{j \in \Rc}\Dc_{j, \Ec_2}\Big\}
\end{align}
correspond to the evaluations of the linearized polynomial $m_{\av}(x)$ (cf.~\eqref{eq:linear_poly}) at  
$$
\big|\{\cv_i~:~i \in \Ec_1\cup \Ec_2\big\} \bigcup \Big\{\cup_{j \in \Rc}\Dc_{j, \Ec_2}\Big\} \big| = \Mc - \Mc^{s}
$$
linearly independent (over $\B$) points in $\F$. Note that the symbols in \eqref{eq:eve2} can be obtained from the symbols observed by the eavesdropper $\ev\big(\Ec_1, \Ec_2\big)$ (cf.~\eqref{eq:eve2}). Given the secure file $\fv^{s}$, one can remove the contribution of $\fv^{s}$ from these evaluations of $m_{\av}(x)$ to obtain the $\Mc - \Mc^{s}$ evaluations of the following polynomial at the $\Mc - \Mc^{s}$ linearly independent (over $\B$) points in $\F$. 
\begin{align}
\label{eq:eve_poly}
m_{\rv}(x) = \sum_{i = 0}^{\Mc - \Mc^{s} - 1} a_{i+1}x^{[i]} = \sum_{i = 0}^{\Mc - \Mc^{s} - 1} r_{i+1}x^{[i]},
\end{align}
where the last equality holds as the first $\Mc - \Mc^{s}$ coordinates of the vector $\av$ are composed of $\Mc - \Mc^{s}$ random symbols $\rv$ (cf.~Construction~\ref{const:secureMSR}). Now, it is straightforward from \cite[Remark~8]{RKSV12} that these evaluations are sufficient to recover the coefficients of the linearized polynomial $m_{\rv}(x)$. In other words, we have that 
\begin{align}
\label{eq:cond2}
H\big(\rv~|~\fv^{s}, \ev(\Ec_1, \Ec_2)\big) = 0.
\end{align}
It follows from \eqref{eq:cond1} and \eqref{eq:cond2} that the coding scheme defined in Construction~\ref{const:secureMSR} satisfies the both requirements of Lemma~\ref{lem:secrecy}. Thus, we have 
$$
I\big(\fv^{s}; \ev(\Ec_1, \Ec_2)\big) = 0.
$$
Since the choice of $\Ec_1$ and $\Ec_2$ is arbitrary, this establishes that the coding scheme obtained by Construction~\ref{const:secureMSR} is secure against an $(\ell_1, \ell_2)$-eavesdropper.
\end{proof}
\section{Conclusion}
\label{sec:conclusion}

We characterize the secrecy capacity of linear repairable MSR codes with $d = n -1$ against a passive eavesdropping attack, where the eavesdropper is allowed to observe repair of $\ell_2$ storage nodes in addition to the content stored on $\ell_1$ storage nodes. One of the code constructions for MSR codes from \cite{YeB16a} proves instrumental in establishing this result. It is an interesting question to establish the similar results for general values of $d \in \{k, k+1,\ldots, n-1\}$. Another direction for future work is to characterize the secrecy capacity of minimum storage cooperative regenerating (MSCR) codes.


\bibliographystyle{plain}
\bibliography{SecureMSR_note}


\end{document}